\def\T{\mathscr{T}}
\def\f{\mathscr{F}}
\def\I{\mathscr{I}}
\def\M{{\cal M}}
\def\MM{\mathscr{M}}
\def\K{\mathbb{K}}
\def\R{\mathbb{R}}
\def\C{\mathbb{C}}
\def\Z{\mathbb{Z}}
\def\N{\mathbb{N}}
\def\LL{{\cal L}}
\def\HH{{\cal H}}
\newtheorem{principle}{Principle}
\newtheorem{remark}{Remark}
\newtheorem{theorem}{\bf Theorem}[section]
\newtheorem{definition}{\bf Definition}[section]
\newtheorem{example}{\bf Example}[section]
\begin{document}
\title{
Thermodynamics for Trajectories of a Mass Point
}
\author{Yoshimasa Kurihara\footnote{yoshimasa.kurihara@kek.jp}\\
{\footnotesize\it The High Energy Accelerator Organization (KEK), 
Tsukuba, Ibaraki 305-0801, Japan}\\ 
$~$\\
Khiem Hong Phan and Nhi My Uyen Quach\\
{\footnotesize\it The Graduate University for Advanced Studies, Tsukuba, Ibaraki 305-0801, Japan}}
\date{}
\maketitle
\begin{abstract}
On the basis of the theory of thermodynamics, a new formalism of classical nonrelativistic mechanics of a mass point is proposed. The particle trajectories of a general dynamical system defined on a $(1+n)$-dimensional smooth manifold are geometrically treated as dynamical variables. The statistical mechanics of particle trajectories are constructed in a classical manner. Thermodynamic variables are introduced through a partition function based on a canonical ensemble of trajectories. Within this theoretical framework, classical mechanics can be interpreted as an equilibrium state of statistical mechanics. The relationship between classical and quantum mechanics is discussed from the viewpoint of statistical mechanics. The maximum-entropy principle is shown to provide a unified view of both classical and quantum mechanics.
\end{abstract}
\section{Introduction}
Quantum mechanics is considered to be the most basic theory of nature. All phenomena, including gravitational interactions, have an underlying quantum-mechanical interpretation. Quantum mechanics describes the microscopic behavior of particles under fundamental forces and has been adopted in numerous applications. However, our understanding of quantum mechanics remains incomplete. One of the most characteristic and mysterious aspects of quantum mechanics is that particle properties are described by probability amplitudes. The probabilistic aspects of quantum mechanics are inherent characteristics and are not due to lack of detailed information such as partition functions in statistical mechanics. Therefore, understanding why and how the probabilistic nature of quantum mechanics emerges from a primary principle is of critical importance. To pursue this purpose, we propose herein to use a thermodynamic theory.
\\
Let us recall the relationship between thermodynamics and statistical mechanics. Thermodynamics is a field of physics that discusses the relationship between the macroscopic physical quantities such as temperature, pressure, volume, energy, entropy, and heat and/or work from outside of the system. Thermodynamics was established before the microscopic details were clarified. Later, statistical mechanics was constructed on the basis of the microscopic details of classical mechanics using thermodynamics as a guiding principle. However, statistical mechanics based on classical mechanics failed to explain, for instance, the entire nature of electromagnetic waves radiated from gases and metals, which provided a hint about quantum mechanics. Although the microscopic details of statistical mechanics were replaced by quantum mechanics instead of by classical mechanics, the consequences of thermal dynamics remain true, and again thermal dynamics plays the role of a guiding principle in constructing a theory. Thermodynamics and its second law are still active in the field of physics, and are discussed, for instance, by Lieb and Yngvason in their epoch-making paper\cite{Lieb19991}. Relationships between thermodynamics and quantum mechanics are also intensively discussed by Gemmer, Michel and Mehler\cite{QTD}. They gave new explanations for the emergence of thermodynamics behavior from a quantum mechanical system. Our way is something opposite to their intention, `` the emergence of quantum mechanics from the thermodynamics''.
\\
In our pursuit of the principle that underlies the probabilistic characteristics of the basic equations of motion, we suspend the notion that quantum mechanics is the most fundamental theory of nature and regard it as a phenomenological theory. In other words, we propose to construct the thermodynamics of quantum mechanics and then pursue the underlying mechanics, which must be a more fundamental theory of nature. In this study, we attempt to construct the thermodynamics of classical and quantum mechanics of a mass point. To construct the thermodynamics of classical mechanics, an appropriate definition of entropy must be introduced. To determine the entropy of the classical dynamical system of a mass point, we consider the system in geometrical terms and introduce a general dynamical system $\acute{a}~la$ Arnol'd, Vogtmann and Weistein\cite{arnol1989mathematical}. Then, the analogy between the thermodynamics of gases and the Hamiltonian formalism of classical dynamics guides us to identify an appropriate definition of entropy. An equation of motion of the system can be extracted by requiring the maximum-entropy principle (instead of the principle of least action). Finally, quantum fluctuations can be understood in analogy with thermal fluctuations. This view may bring new insights of quantum mechanics.
\\
This paper is organized as follows:
After introducing a general framework for dynamic systems in Section 1, a generalized Hamiltonian formalism is developed in Section 2. The statistical mechanics of particle trajectory in the proposed framework is developed in Section 3. Here, we demonstrate that a thermal-equilibrium state of the trajectory corresponds to the classical mechanics of a mass point. Section 4 is devoted to relating classical and quantum mechanics using the statistical mechanics analogy. 
Conclusions and further discussion are presented in Section 5.

\section{General Mechanics}
%
%
\subsection{Dynamics on a Symplectic Manifold}\label{DYonSyMan}
First, we introduce a general dynamic space on which various dynamical systems are developed. A structure given here is a common for classical dynamic systems treated in this report. Is is considered to give the minimum mathematical system  to discuss classical mechanics.
\begin{definition}\label{sympmani}{\bf General Dynamic Space}\\
A general dynamical space is a collection of sets, $\{\M=\MM\otimes\T,f,{\cal P}=\bm{\xi}\otimes\bm{\eta},(\omega,\Omega)\}$ whose elements are defined as follows:
\end{definition}
\begin{enumerate}
\item A manifold $\MM\in\R^n$ is an $n$-dimensional Euclidean space called a space manifold.
\item A manifold $\T\in\R$ is a one-dimensional smooth manifold called a time manifold. A point on $\T$ is called an order parameter or time. 
\item The direct product $\M=\T\otimes\MM$ is a space-time manifold. A position vector $\bm\xi$ on an open neighborhood $U_p$ of a point $p$ can be expressed in terms of local orthonormal bases as 
\begin{eqnarray*}
\bm\xi=(\xi^0=\tau,\xi^1,\cdots,\xi^n).
\end{eqnarray*}
A flat metric whose metric tensor $g_{\mu\mu}=(-,+,\cdots,+)$ is imposed on the space-time manifold. Then, the $\M$ becomes a Riemannian manifold with an indefinite metric. 
\item A tangent bundle of $\M$ is written as $T\M=\bigcup_{p\in\M}T_{p}\M$. A tangent vector to $\bm\xi$ is expressed as 
\begin{eqnarray*}
\partial_{\bm\xi}&=&\left(\frac{\partial}{\partial\xi^0}=\frac{\partial}{\partial\tau},\frac{\partial}{\partial\xi^1},\cdots,\frac{\partial}{\partial\xi^n}\right)\in T\M.
\end{eqnarray*}
\item In the same manner, we introduce a cotangent bundle, $T^*\M$, with cotangent vector to $\bm\xi$ expressed as
\begin{eqnarray*}
d\bm\xi&=&\left(d\xi^0=d\tau,d\xi^1,\cdots,d\xi^n\right)\in T^*\M.
\end{eqnarray*}
\item A characteristic function $f$ is a $C^\infty$-function that maps a point on $\M$ to a real number. 
The characteristic function is assumed holomorphic for a position vector $\bm\xi$ such that $\partial^2 f/\partial\xi^\mu\partial\xi^\nu=\partial^2 f/\partial\xi^\nu\partial\xi^\mu,~\mu,\nu=0,\cdots,n$.
\item A momentum vector is introduced in terms of the characteristic function as
\begin{eqnarray*}
\bm\eta&=&\left(\eta^0,\cdots,\eta^n \right),\\
&=&\partial_{\bm\xi^*}f=\left(\frac{\partial f}{\partial\xi_0},\cdots,\frac{\partial f}{\partial\xi_n}\right),
\end{eqnarray*}
where $\bm\xi^*=\{g_{\mu\nu}\xi^\mu\}$. Here we follow the Einstein convention in summing the repeated indices, summing the Greek indices from $0$ to $n$, and summing the Roman indices from $1$ to $n$, unless otherwise stated. Not all $\eta$'s and $\xi$'s are independent because the characteristic function imposes a constraint. We assume that the zeroth component of the momentum vector is a function of other components, $\eta^0(\eta^1,\cdots,\eta^n,\xi^0,\cdots,\xi^n)$.
A direct product of position and momentum vector space ${\cal P}=\bm\xi\otimes\bm\eta$ is called an extended phase space. This space is a $(2n+2)$-dimensional smooth manifold.
\item On the extended phase-space, $1$- and $2$-forms such as
\begin{eqnarray}
\omega&=&
\bm\eta\cdot d\bm\xi\\
&=&\eta_\mu d\xi^\mu\nonumber\\
&=&\eta_id\xi^i-\eta_0d\tau,\label{1form}\\
\Omega&=&d\omega\nonumber \\
&=&d\eta_\mu \wedge d\xi^\mu\nonumber \\
&=&d\eta_i\wedge d\xi^i-d\eta_0\wedge d\tau,\label{2form}
\end{eqnarray}
are defined\cite{arnol1989mathematical}, which are called characteristic $1$-form and $2$-form, respectively.
The characteristic $2$-form is assumed to be a closed form satisfying $d\Omega=0$. The even dimensional manifold that has a non-degenerate and closed $2$-form is called a symplectic manifold.
\end{enumerate}
\begin{definition}{\bf Time evolution operator and Trajectory}\\
The smooth map 
\begin{eqnarray*}
\phi_s:\M\rightarrow\M:{\bm \xi}=(\tau,\xi^1,\cdots,\xi^n)
\mapsto\phi_s{\bm \xi}={\bm \xi}'=(\tau+s,\xi'^1(\tau+s),\cdots,\xi'^n(\tau+s)),
\end{eqnarray*}
is called a time evolution operator.
The map $\phi_s$ generates a one-dimensional manifold $\bm\gamma\subset\M$, such that
\begin{eqnarray*}
\bm\gamma=\{\bm\xi(\tau)=(\tau,\xi^1,\cdots,\xi^n)(\tau)~\big|~\phi_s(\tau_1,\xi'^1(\tau_1+s),\cdots,\xi'^n(\tau_1+s), s\in[0,\tau_2-\tau_1]\}.
\end{eqnarray*}
This manifold is called a trajectory. 
\end{definition}
We assume that a tangent vector exists for a given trajectory, i.e.,
\begin{eqnarray*}
\frac{d\bm\xi(\tau)}{d\tau}\Bigl|_{\tau=t}&=&
\lim_{\delta t\to\pm 0}\frac{\bm\xi(t+\delta t)-\bm\xi(t)}{\delta t},
\end{eqnarray*} 
at any $t\in[\tau_1,\tau_2]$.
The time evolution operator maps a momentum vector as
\begin{eqnarray*}
\bm\eta'=\phi_s\bm\eta=\partial_{\bm\xi^*}f\Big|_{\bm\xi=\phi_s\bm\xi}.
\end{eqnarray*}
The time evolution operator, which introduces dynamics to the general dynamic space, must describe some physical principle. More specifically, the following is true.
\begin{principle}\label{calpri}{\bf (Cartan)}{\rm \cite{caltan, 1998kaisekirikigaku}}\\
When the integration of the characteristic $1$-form is invariant under map $\phi_s$, i.e.,
\begin{eqnarray*}
\int_l \omega({\bm\xi,\bm\eta})&=&
\int_{l(\phi_s\bm\xi,\phi_s\bm\eta)}\omega(\phi_s\bm\xi,\phi_s\bm\eta),
\end{eqnarray*}
the trajectory induced by $\phi_s$ is physically realized.
Here $l=l(\bm\xi,\bm\eta)$ is any closed circle in an extended phase space $(\cal P)$ at fixed $\tau$.
\end{principle}
\begin{theorem}{\bf (Characteristic Equations)}\\
Trajectories satisfying Principle~\ref{calpri} determine an equation of motion such that
\begin{eqnarray*}
\frac{d\tilde\xi^i}{ds}=\frac{\partial \tilde\eta_0}{\partial \eta^i},~~
\frac{d\tilde\eta_i}{ds}=-\frac{\partial\tilde\eta_0}{\partial\xi^i},~~
\frac{d\tilde\eta_0}{ds}=\frac{\partial\tilde\eta_0}{\partial s},
\end{eqnarray*}
which are known as characteristic equations. 
\end{theorem}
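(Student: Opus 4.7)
The plan is to translate Cartan's invariance condition for $\oint\omega$ into the pointwise equation $i_X\Omega = 0$, where $X$ is the flow vector field on the constraint surface, and then read off the three characteristic equations by expanding the components of $i_X\Omega$ in the basis $\{d\tau, d\xi^i, d\eta_i\}$ of the $(2n+1)$-dimensional constraint surface cut out by $\eta_0 = \eta_0(\tau,\xi^i,\eta_i)$.

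First I would carry out the classical vortex-tube argument. Given a small closed loop $l$ at a fixed value of $\tau$ in the extended phase space $\cal P$, I would sweep it along the flow $\phi_s$ to obtain a two-dimensional tube $\Sigma$ with boundary $\partial\Sigma = \phi_s l - l$. Principle~\ref{calpri} gives $\oint_l\omega = \oint_{\phi_s l}\omega$, and Stokes' theorem then yields $\int_\Sigma d\omega = \int_\Sigma \Omega = 0$. Letting $l$ shrink to a point and vary through $\cal P$ shows that $\Omega$ vanishes on every two-plane that contains the flow direction $X$, i.e.\ $i_X\Omega = 0$ as a 1-form on the constraint surface.

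Next I would parametrize the flow as $X = \partial_\tau + \dot\xi^i\partial_{\xi^i} + \dot\eta_i\partial_{\eta_i}$ (using $\tau = \tau_0 + s$, so $d\tau/ds = 1$), differentiate the constraint to obtain $d\eta_0 = (\partial\eta_0/\partial\tau)\,d\tau + (\partial\eta_0/\partial\xi^i)\,d\xi^i + (\partial\eta_0/\partial\eta_i)\,d\eta_i$, and substitute into $\Omega = d\eta_i \wedge d\xi^i - d\eta_0\wedge d\tau$. Contracting with $X$ and collecting coefficients, the $d\xi^i$-component gives $\dot\eta_i = -\partial\eta_0/\partial\xi^i$, the $d\eta_i$-component gives $\dot\xi^i = \partial\eta_0/\partial\eta_i$, and the $d\tau$-component vanishes automatically once these are imposed. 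The third characteristic equation follows from the chain rule: $d\eta_0/ds = \partial\eta_0/\partial\tau + (\partial\eta_0/\partial\xi^i)\dot\xi^i + (\partial\eta_0/\partial\eta_i)\dot\eta_i$, and substituting the first two equations cancels the cross terms, leaving $\partial\eta_0/\partial\tau = \partial\eta_0/\partial s$.

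The main obstacle I expect is the rigorous passage from integral invariance over loops at fixed $\tau$ to the pointwise vanishing of $i_X\Omega$. One must argue that such loops, together with flow-transport, are sufficient to probe every two-plane transverse to $X$ on the constraint surface. This uses the fact that $\Omega$ restricted to this $(2n+1)$-dimensional surface has a one-dimensional kernel, spanned precisely by $X$, which forces the pointwise condition $i_X\Omega = 0$ to be equivalent to the integral statement in Principle~\ref{calpri}.
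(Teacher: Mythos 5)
Your proposal follows essentially the same route as the paper: the Cartan tube plus Stokes' theorem reduces Principle~1 to the pointwise condition $\langle\tilde{\Omega}|\phi_s\rangle=0$ (your $i_X\Omega=0$), and expanding $d\eta_0$ via the constraint surface and collecting the $d\xi^i$, $d\eta_i$ and $d\tau$ coefficients yields the characteristic equations. The only cosmetic difference is that you obtain the third equation by the chain rule from the first two, whereas the paper reads it off directly as the vanishing coefficient of $ds$.
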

Here ${\bm \tilde\eta}=(\tilde\eta_0,\tilde\eta_1(s),\cdots,\tilde\eta_n(s))$ is the tangent vector along the trajectory $\tilde\xi(s)$, defined as
\begin{eqnarray*}
\tilde\eta(s)&=&\partial_{\bm\xi^*}f\Big|_{\bm\xi=\tilde\xi(s)},
\end{eqnarray*}
where $\tilde\eta^0=\partial f/\partial\tau|_{\tau=s}$ is assumed to be a function of $\bm\xi$ and $\eta_i,~(i=1,\cdots,n)$.
\begin{figure}[tb]
  \begin{center} 
    \includegraphics[height=5cm]{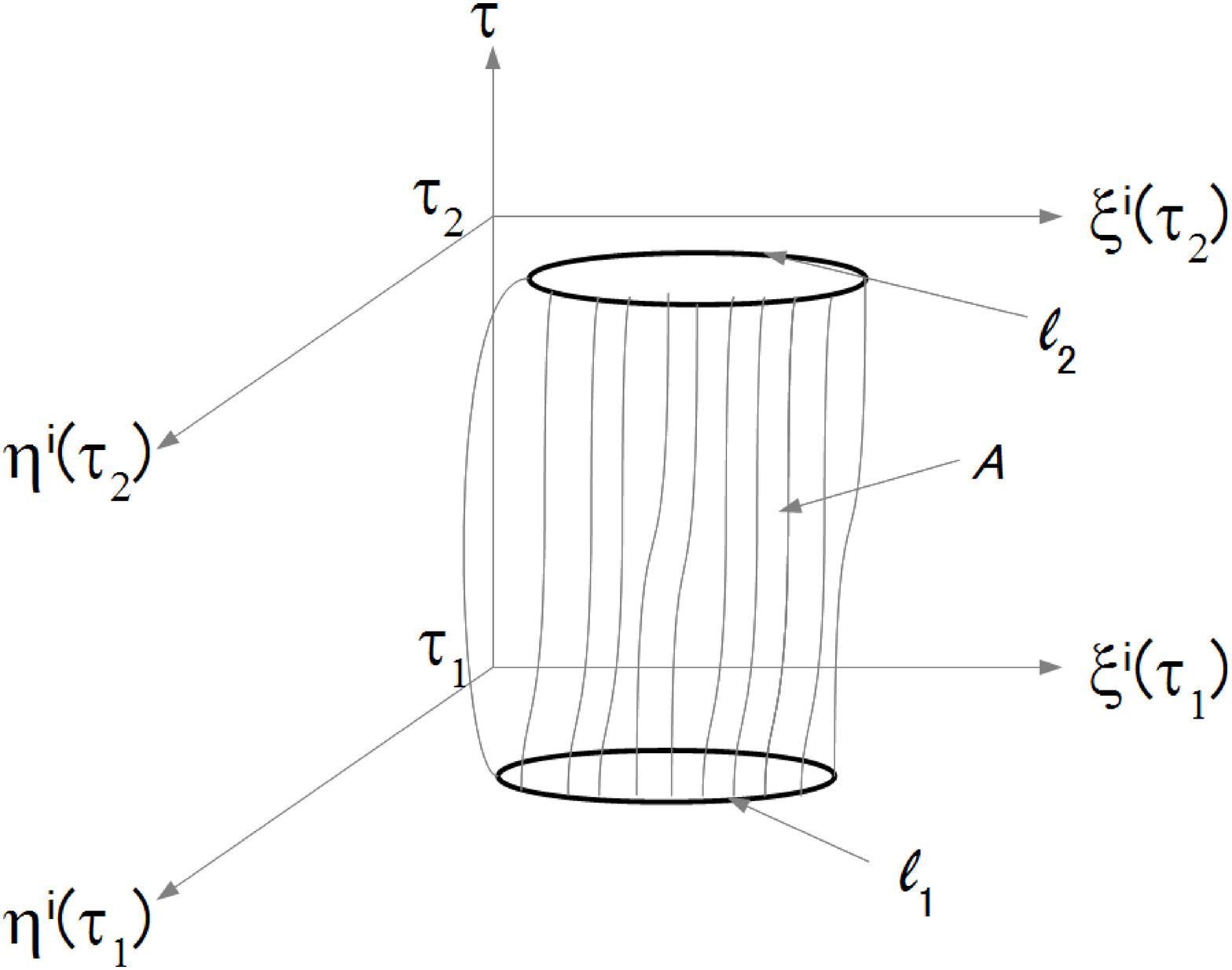}
    \caption{\footnotesize Cartan tube.} 
    \label{tube}
  \end{center}
\end{figure}
\begin{proof}
The characteristic forms along the trajectory $\tilde\xi(s)$ can be written as 
\begin{eqnarray*}
\tilde\omega&=&\tilde\eta_\mu d\tilde\xi^\mu\\
&=&-\tilde\eta_0d\tau
+\tilde\eta_id\tilde\xi^i,\\
\tilde\Omega&=&d\tilde\eta_\mu\wedge d\tilde\xi^\mu\\
&=&-d\tilde\eta_0\wedge d\tilde\xi^0+d\tilde\eta_i\wedge d\tilde\xi^i,
\end{eqnarray*}
and by definition
\begin{eqnarray}
d\tilde\xi^0&=&d\tau,\nonumber \\
d\tilde\eta_0&=&
\left[\frac{\partial\tilde\eta_0}{\partial\xi^i}d\xi^i
+\frac{\partial\tilde\eta_0}{\partial\eta_i}d\eta_i
+\frac{\partial\tilde\eta_0}{\partial \tau}d\tau
\right]_{\tau=s},
\label{f1}\\
&=&
\frac{\partial\tilde\eta_0}{\partial\xi^i}d\tilde\xi^i
+\frac{\partial\tilde\eta_0}{\partial\eta_i}d\tilde\eta_i
+\frac{\partial\tilde\eta_0}{\partial s}ds.\label{f2}
\end{eqnarray}
Here we used a property of an exterior derivative,
$
d(\omega \wedge \eta)=d\omega+(-1)^{deg~\omega}(\omega \wedge d\eta),
$
for any $\omega$ and $\eta$, and $\partial\tilde\xi_0/\partial s=1$.
Note that eq.(\ref{f2}) is a simplified expression of eq.(\ref{f1}).
Consider the {\it Cartan tube} shown in Fig.~$\ref{tube}$. Let $l_1$ be a closed circle at $\tau=\tau_1$, and $l_2=\phi_{\tau_2-\tau_1}({\bm\xi}\in l_1)$. The cylindrical surface of the Cartan tube $A$ consists of trajectories connected from $l_1$ to $l_2$. By Stokes' theorem, an integration of the characteristic $2$-form on $A$ vanishes \footnote{See for example \cite{flanders1963differential}.}, i.e.,
\begin{eqnarray}
\int_A \tilde{\Omega}&=&\int_A d\tilde{\omega},\nonumber \\
&=&\int_{\partial A} \tilde{\omega},\nonumber \\
&=&\int_{l_2} \tilde{\omega}
-\int_{l_1} \tilde{\omega},\nonumber \\
&=&0,\label{f4}
\end{eqnarray}
where the last equivalence follows from the definition of the trajectory, which maintains constant $\tilde\omega$. On the other hand, by contraction of the left-hand side of eq.(\ref{f4}) with $\phi_s=ds\partial/\partial s$, we obtain
\begin{eqnarray}
0&=&
\int_A \langle\tilde{\Omega}|\phi_s\rangle,\label{f5-1}\\
&=&\int_A\left(
-\frac{d\tilde\eta_0}{ds}d\tilde\xi^0
+\frac{d\tilde\eta_i}{ds}d\tilde\xi^i
-d\tilde\eta_0\frac{d\tilde\xi^0}{ds}
+d\tilde\eta_i\frac{d\tilde\xi_i}{ds}
\right),\nonumber \\
&=&
\int_A\left(
\frac{\partial\tilde\eta_0}{\partial s}-\frac{d\tilde\eta_0}{ds}
\right)ds
+\int_A\left(
\frac{d\tilde\eta_i}{ds}+\frac{\partial\tilde\eta_0}{\partial\xi^i}
\right)d\tilde\xi^i
+
\int_A\left(
\frac{\partial\tilde\eta_0}{\partial\eta^i}-\frac{d\tilde\xi_i}{ds}
\right)d\tilde\eta_i,\nonumber
\end{eqnarray}
where $\langle\bullet|\bullet\rangle$ is a contraction of two forms.  
The third step of this contraction uses eq.(\ref{f2}).
For the integral to identically vanish on any $A$, each term in the parentheses must be zero. 
\end{proof}
\noindent
This proof immediately leads to the following remark.
\begin{remark}{\bf (Coordinate Independent Representation)}\\
The characteristic equations can be expressed independently of the coordinates as
\begin{eqnarray*}
\langle\tilde{\Omega}|\phi_s\rangle=0.
\end{eqnarray*}
\end{remark}
\begin{proof}
The proof is evident from eq.(\ref{f5-1}), which is true for all $A$.
\end{proof}
\noindent
Hereafter, the $\tilde~$ of the trajectory is omitted for simplicity.
\begin{definition}\label{HLA}{\bf (Hamiltonian, Lagrangian and Action)}\\
A Hamiltonian, $\HH$, is defined from the zeroth component of the momentum vector as
\begin{eqnarray}
\HH=\eta_0.
\end{eqnarray}
From which an integration of the characteristic $1$-form along a trajectory can be expressed as
\begin{eqnarray*}
\I(\gamma)&=&\int_\gamma\omega,\\
&=&\int_\gamma\left(\eta_i d\xi^i-\HH d\tau\right),\\
&=&\int_{\tau_0}^{\tau_1}d\tau\gamma^*
\left(\eta_i \frac{d\xi^i}{d\tau}-\HH \right).
\end{eqnarray*}
This integral is called an action, where $\gamma$ is the trajectory and $\gamma^*(\bullet)$ is a pull of $(\bullet)$ by $\gamma$. The characteristic $1$-form can be expressed as
\begin{eqnarray}
\omega&=&\left(\gamma^*\LL\right)(\tau)d\tau,\nonumber \\
\LL&=&\eta_i \frac{d\xi^i}{d\tau}-\HH. \label{lagdef}
\end{eqnarray}
Here $\LL$ is the Lagrangian. 
\end{definition}
\noindent
Since the action is independent of the parameterization of the trajectory, the pull of the trajectory $\gamma^*$ is hereafter omitted unless required for clarity. The action can be interpreted as the  ``distance'' between two points on the space-time manifold measured by the characteristic $1$-form $\omega$.
A detailed treatment of the trajectory $\gamma$ is given in Section.~\ref{ushs}.
%
%
\subsection{Examples of Dynamics}
Here we provide two concrete examples of the general dynamical space from thermodynamics and Hamiltonian mechanics. Similarity between these two examples guides us to new insights in classical dynamics.
\begin{example}{\bf (Irreversible Thermodynamics)}\footnote{These examples are given in \cite{story2002dynamics}. Other examples are provided therein.}
\vskip -1.0cm
\end{example}
\noindent
First, let us consider an isolated system of gas. The adiabatic free expansion of an isolated system enlarges the entropy $(S) ($according to the second law of thermodynamics$)$ and maintains constant temperature$ (T)$ because heat energy cannot be gained or lost. In this case, the order parameter is entropy$ (S\in\T)$, while the one-dimensional space-manifold is volume $(V\in\MM)$. The thermodynamics is described by the following characteristic function:
\begin{eqnarray}
f_{TD}&=&pV-ST\label{f5},
\end{eqnarray}
where $p$ is the pressure of gas in an insulating container and $p$ is the pressure of the system. From the thermodynamic characteristic function, $\eta_0$ can be obtained as $\partial f_{TD}/\partial S=T$. The general dynamical space then becomes
\begin{eqnarray*}
(\xi^0,\xi^1)&=&(S,V),\\
(\eta^0,\eta^1)&=&(T,p),\\
\omega_E&=&pdV-TdS,\\
\Omega_E&=&dp\wedge dV-dT\wedge dS.
\end{eqnarray*}
The characteristic function can be expressed as a scalar function on the space-time manifold as follows:
\begin{eqnarray*}
f_{TD}&=&\xi^\mu\eta_\mu,\\
&=&\xi^1\eta_1-\xi^0\eta_0,\\
&=&pV-ST.
\end{eqnarray*}
We assume that a Hamiltonian exists such that $\eta_0=T(p,V)$.
From the characteristic function and space-time manifold, we obtain the following characteristic equations:
\begin{eqnarray*}
\frac{dV}{dS}=\frac{\partial T}{\partial p},~~
\frac{dp}{dS}=-\frac{\partial T}{\partial V},~~
\frac{dT}{dS}=\frac{\partial T}{\partial S}=0.
\end{eqnarray*}
These are known as the Maxwell relations for an adiabatic transition. The characteristic $1$-form corresponds to the energy of the system ($\omega_E=-dE$)\footnote{For a statistical physics treatment, see for example \cite{landau1980statistical}}.\\
Next let us consider a system contained in an insulating container with an expandable wall and its isothermal reversible transition. The initial temperature of the system is $T_1$. The system is attached to a heat bath of temperature $T_2>T_1$ and pressure is maintained constant. Here temperature is adopted as the order parameter ($T\in\T$) because temperature increases monotonically in this case. In this example, the one-dimensional space-manifold is pressure $p\in\MM$. From the characteristic function, eq.(\ref{f5}), $\eta_0$ can be obtained as $\partial f_{TD}/\partial T=S(V,p)$, and the general dynamical space becomes
\begin{eqnarray*}
(\xi^0,\xi^1)&=&(T,p),\\
(\eta^0,\eta^1)&=&(S,V),\\
\omega_G&=&Vdp-SdT,\\
\Omega_G&=&dV\wedge dp-dS\wedge dT,
\end{eqnarray*}
and the characteristic functions are derived as
\begin{eqnarray*}
\frac{dp}{dT}=\frac{\partial S}{\partial V},~~
\frac{dV}{dT}=-\frac{\partial S}{\partial p},~~
\frac{dS}{dT}=\frac{\partial S}{\partial T}=0.
\end{eqnarray*}
These functions are alternative expressions of the above-derived Maxwell relations.
Here the characteristic $1$-form corresponds to the Gibbs free energy of the system, given as
\begin{eqnarray}
\omega_G&=&Vdp-TdS=dG.\label{HFE}
\end{eqnarray}
\begin{example}{\bf (Hamiltonian Formalism of mass points)}
\vskip -1.0cm
\end{example}
\noindent
Next let us consider the well-known Hamiltonian formalism of mass points with $n$ degrees of freedom. The order parameter and space-time manifold are defined as $t\in\T$ and $(q^1,\cdots,q^n)\in\MM$, respectively. The characteristic function is
\begin{eqnarray*}
f&=&\xi^\mu\eta_\mu,\\
&=&q^ip_i-Ht,
\end{eqnarray*}
where $\bm\xi=(t,q^1,\dots,q^n)$ and $\bm\eta=(H,p_1\,\cdots,p_n)$.
Assuming the Hamiltonian as $H=H(t,q^i,p_i)$, the characteristic forms are obtained as
\begin{eqnarray*}
\omega&=&p_idq^i-Hdt,\\
\Omega&=&dp_i\wedge dq^i-dH\wedge dt.
\end{eqnarray*}
Then, the celebrated canonical equations of motion can be expressed as the following characteristic functions: 
\begin{eqnarray*}
\frac{dq^i}{dt}=\frac{\partial H}{\partial p_i},~~
\frac{dp_i}{dt}=-\frac{\partial H}{\partial q_i},~~
\frac{dH}{dt}=\frac{\partial H}{\partial t}.
\end{eqnarray*}
\section{Underlaying Structure of Hamiltonian Systems}\label{ushs}
The similarity between thermodynamics and the Hamiltonian formalism of mass points was highlighted in the previous section. Both systems show the same symplectic structure of base manifold and evolve along an order parameter under a set of ``equations of motion.''  However, thermodynamic ideal gas systems are known to possess an underlying structure generated by the statistical mechanics of independent gas molecules. On the other hand, the Hamiltonian formalism of mass points requires no underlying structure. Here we treat the Hamiltonian formalism as a thermodynamic system and assume a virtual underlying structure for the motions of mass points. Among several candidates for a microscopic entity governing the Hamiltonian formalism, particle trajectories are adopted for the following reasons. The Hamiltonian formalism of mass points differs from ideal-gas thermodynamics primarily by importance of the trajectory. The main goal of the former is to determine the trajectory of a mass point under applied forces and initial conditions. In contrast, in the latter, the trajectory cannot be measured and has no essential meaning, similar to the quantum mechanics of mass points. By considering particle trajectories as the statistical entity, a relation between classical and quantum mechanics may be clarified.
This section considers the statistical mechanics of particle trajectories in the general dynamic space of {\bf Definition~\ref{sympmani}}.
\subsection{Geometrical Preparation}
This subsection introduces the geometrical objects used in subsequent discussions.
\begin{definition}\label{curvsp}{\bf (Curvilinear Path)}\\
A set of maps $\gamma$ such that
\begin{eqnarray*}
&\gamma& : \T \rightarrow \Gamma \subset \M : t \in [t_1,t_2] \mapsto 
{\bf \gamma}(t)=\left(\gamma^0(t)=t, \gamma^1(t), \gamma^2(t), \cdots, \gamma^n(t)\right),\label{curveg}\\
&\gamma^i&:\R \rightarrow \R:t\mapsto\gamma^i(t),~~\gamma^i\in C^\infty,
\end{eqnarray*}
is called a curvilinear path $($or simply ``path''$)$, and a set  of paths, $\Gamma$, is called a curvilinear-path space. 
\end{definition}
\noindent
In this section, we consider only those paths whose end points are fixed at $\gamma(t_1)=\bm\xi_1=(t_1,\xi_1^1,\cdots,\xi_1^n)$ and $\gamma(t_2)=\bm\xi_2=(t_2,\xi_2^1,\cdots,\xi_2^n)$.
\begin{definition}\label{velosity}{\bf (Velocity Vector)}\\
A velocity vector is a tangent vector at a point $\gamma(t)$ on the curvilinear path, defined as
\begin{eqnarray*}
\frac{d\gamma}{dt}(t)&=&\left(1,\frac{d\gamma^1}{dt}(t),\frac{d\gamma^2}{dt}(t),
\cdots,\frac{d\gamma^n}{dt}(t)\right).
\end{eqnarray*}
\end{definition}
\noindent
Hereafter, the velocity vector is written as $\dot{\bf \gamma}(t)=(1,\dot{\gamma}^1(t),\dot{\gamma}^2(t),\cdots,\dot{\gamma}^n(t))$. The velocity vector can be expressed in terms of natural bases on a tangent space $T_t\gamma$ at $\gamma(t)$ as
\begin{eqnarray*}
\dot{\bf \gamma}^i(t)=
\frac{d\gamma^i(s)}{ds}\frac{\partial}{\partial \gamma^i}\Big|_{s=t},
\end{eqnarray*}
where $i$ runs from $1$ to $n$ and the components are not summed.
A tangent vector bundle $\dot{\Gamma}=\bigcup_{\gamma(t)\subset\gamma\in\Gamma}T_t\gamma$ is called a velocity bundle.
\begin{definition}{\bf (Variational Vector)}\\
A variational vector is defined as a cotangent vector at $\gamma(t)\in\M$ on the curvilinear path $\delta\gamma(t)$.
In terms of natural bases of the cotangent space $T^*_t\gamma$, variational vectors can be expressed as
$
\delta\gamma^i(t)=\delta\gamma^i d\gamma_i
$, where $i$ runs from $1$ to $n$ and the components are not summed. The zeroth component is $\delta\gamma^0(t)=0$.
A cotangent vector bundle 
$
\delta\Gamma=\bigcup_{\gamma(t)\subset\gamma\in\Gamma}T^*_t\delta\gamma
$ is called a variational bundle. 
\end{definition}
\noindent
The bases of a velocity bundle and variational bundle are orthogonal, i.e.,
$
d\gamma^\mu\partial/\partial\gamma^\nu=\delta^\mu_\nu
$.
\begin{definition}\label{henbun}{\bf (Variational Operator)}\\
A map induced by a variational vector such that
\begin{eqnarray*}
\delta:\Gamma \rightarrow \Gamma:
\gamma \mapsto (\delta\circ\gamma)(t)=\left(\gamma+\delta\gamma \right)(t),
\end{eqnarray*}is called a variational operator. 
\end{definition}
\noindent
Here $\left(\gamma+\delta\gamma \right)(t)$ denotes the sum of two vectors $\gamma$ and $\delta\gamma$, which are defined on $\M$. The curvilinear path $\delta_\gamma(t)=(\delta\circ\gamma)(t)$ is assumed to become an element of the curvilinear-path space, i.e., $\delta_\gamma(t)\in C^\infty$. 
The distance between two paths $\gamma$ and $\delta_\gamma$ is defined as
\begin{eqnarray*}
||\delta_\gamma-\gamma||&=&||\delta\gamma||,\\
&=&\int_{t_1}^{t_2}dt|\delta\gamma(t)|,\\
&=&\int_{t_1}^{t_2}dt\left\{\sum_{k=1}^n\left(\delta\gamma^k(t)\right)^2\right\}^{1/2}.
\end{eqnarray*}
Suppose that $\f$ is a functional defined on $\Gamma$ such that $\f:\Gamma \rightarrow \R^m:\gamma \mapsto \f(\gamma),~m\in\N$. 
The variational operator maps a functional $\f$ to another functional as
$
\delta:\f(\gamma) \mapsto \delta\f(\gamma)=\f(\delta_\gamma).
$
Since $\gamma$ is a map defined in {\bf Definition \ref{curvsp}}, a functional $\f$ can be pulled back to a function defined on $\R$ using a pull of $\gamma$, which denoted as $\gamma^*$:
\begin{eqnarray*}
\gamma^*:\f(\gamma) \mapsto \gamma^*\left(\f(\gamma)\right)(\tau)\in\R.
\end{eqnarray*}
For simplicity, we use a shorthand, 
$\gamma^*(\f(\gamma))(\tau)=\f(\gamma)(\tau)$.
A variational operator can be pulled as
$
\gamma^*(\delta_\gamma\f(\gamma))(\tau)=\f(\gamma+\delta\gamma)(\tau).
$
Thus variation of a functional $\f$ can be defined as
\begin{eqnarray*}
\frac{\delta\parallel\f(\gamma)\parallel}{\delta\gamma}&=&\displaystyle \lim_{\parallel\delta\gamma\parallel \to 0}
\frac{\parallel\f(\gamma+\delta\gamma)\parallel-\parallel\f(\gamma)\parallel}
{\parallel\delta\gamma\parallel}.
\end{eqnarray*}
When a variation is zero with some $\gamma_c$, it is said that the functional $\f$ has a extremal at $\gamma_c$.
\subsection{Dynamics of Paths}
We now define the general dynamic space occupied by a mass point and impose a probability space on it. For simplicity, we treat a single mass point.
\begin{definition}\label{Laglangespace}{\bf (General Dynamic Path Space)}\\
The general dynamical space occupied by a point particle is described as follows:
\end{definition}
\begin{itemize}
\item {\bf Space Manifold}: The space manifold $\MM$ is a three-dimensional Euclidean space $\R^3$. 
\item {\bf Time Manifold}: The time manifold $\T$ is Newtonian absolute time, which is commonly used in inertial system analysis. A space-time manifold $\T\otimes\MM=\R\otimes\R^3$ is called a Galilean Manifold.
\item {\bf Characteristic Function}: The characteristic function (functional) is defined as 
\begin{eqnarray*}
f&=&\pi_\mu\gamma^\mu,\\
&=&\pi_i\gamma^i-\HH({\bm \pi},{\bm \gamma})t, 
\end{eqnarray*}
where ${\bm \gamma}$ is a path defined on $\MM$, and ${\bm \pi}$ is a vector defined on the velocity bundle. The mass is defined as $m=||{\bm \pi}||/||{\bm {\dot \gamma}}||$. The characteristic function $f$ can be considered as a functional of the path defined on $\Gamma\otimes{\dot \Gamma}$.
\item {\bf Characteristic Forms}: The characteristic 1- and 2-forms, respectively, are defined as
\begin{eqnarray*}
\omega&=&\pi_\mu d\gamma^\mu,\\
&=&\pi_i d\gamma^i-\HH dt,\\ 
\end{eqnarray*}
and
\begin{eqnarray*}
\Omega&=&d\pi_\mu\wedge d\gamma^\mu,\\
&=&d\pi_i\wedge d\gamma^i-d\HH\wedge dt^\mu.
\end{eqnarray*}
\end{itemize}
Here the characteristic functional is assumed to be invariant under affine transformation on the space manifold\footnote{If the characteristic functional is assumed to be invariant under affine transformation on the entire space-time manifold, then the manifold is called a {\it Minkowski manifold} and the system is called {\it relativistic}.}, $\MM$. In this notation, the action and Lagrangian, respectively, are expressed as 
\begin{eqnarray*}
\I(\gamma)&=&\int\LL dt,
\end{eqnarray*}
and
\begin{eqnarray*}
\LL dt&=&(\pi_i d\dot\gamma^i-\HH)dt,\\
&=&\pi_\mu d\gamma^\mu.
\end{eqnarray*}
This transformation from the Hamiltonian to Lagrangian is known as a Legendre transformation and the independent variables of $\LL$ are now $(\gamma,\dot\gamma)$. Here the flat Riemannian metric $g_{\mu\mu}=(-,+,+,+)$ correctly induces a Legendre transformation from $\HH$ to $\LL$.\\
From the geometrical framework introduced in the previous subsection, we now construct a dynamical path system of a mass point evolving under the Hamiltonian formulation.
\begin{theorem}\label{Hamilton}{\bf (Hamilton)}\\
In a general dynamical space, the following two trajectories are equivalent:
\begin{enumerate}
\item Trajectory that gives the extremal of variation of the action: $\delta\I(\gamma)=0$.\\
\item Trajectory that satisfies the characteristic equations:
\begin{eqnarray*}
\frac{d\gamma^i}{dt}=\frac{\partial\HH}{\partial \pi_i},~~
\frac{d\pi_i}{dt}=-\frac{\partial\HH}{\partial \gamma_i},~~
\frac{d\HH}{dt}=\frac{\partial\HH}{\partial t}.
\end{eqnarray*}
\end{enumerate}
\end{theorem}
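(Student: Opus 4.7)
The plan is to prove the two statements equivalent by computing $\delta\I(\gamma)$ directly from the definition of the action and reading off the resulting Euler--Lagrange-type conditions. Because the excerpt works in the Hamiltonian (extended phase-space) picture, I will treat $\gamma^i$ and $\pi_i$ as independent variables undergoing independent variations $\delta\gamma^i$ and $\delta\pi_i$, both $C^\infty$, with $\delta\gamma^i(t_1)=\delta\gamma^i(t_2)=0$ (fixed endpoints, as stipulated after Definition~\ref{curvsp}); no boundary condition is imposed on $\delta\pi_i$.

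First I would substitute the pulled-back characteristic $1$-form $\omega=(\pi_i\dot\gamma^i-\HH)\,dt$ into $\I(\gamma)=\int_{t_1}^{t_2}\LL\,dt$ and compute
\begin{eqnarray*}
\delta\I(\gamma)
&=&\int_{t_1}^{t_2}\left[\dot\gamma^i\,\delta\pi_i+\pi_i\,\delta\dot\gamma^i
-\frac{\partial\HH}{\partial\gamma^i}\delta\gamma^i
-\frac{\partial\HH}{\partial\pi_i}\delta\pi_i\right]dt.
\end{eqnarray*}
The only nontrivial manipulation is the second term: integrating by parts and using $\delta\gamma^i(t_1)=\delta\gamma^i(t_2)=0$ converts $\pi_i\,\delta\dot\gamma^i$ into $-\dot\pi_i\,\delta\gamma^i$ with vanishing boundary contribution. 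Collecting $\delta\gamma^i$ and $\delta\pi_i$ coefficients gives
\begin{eqnarray*}
\delta\I(\gamma)=\int_{t_1}^{t_2}\left[\left(\dot\gamma^i-\frac{\partial\HH}{\partial\pi_i}\right)\delta\pi_i
-\left(\dot\pi_i+\frac{\partial\HH}{\partial\gamma^i}\right)\delta\gamma^i\right]dt.
\end{eqnarray*}

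For the implication $(1)\Rightarrow(2)$, vanishing of $\delta\I$ for arbitrary independent $\delta\gamma^i,\delta\pi_i$ (with the endpoint constraint on $\delta\gamma^i$ only) together with the fundamental lemma of the calculus of variations forces the two bracketed expressions to vanish identically, yielding the first two characteristic equations. The third equation, $d\HH/dt=\partial\HH/\partial t$, is not an independent variational condition: I would derive it by applying the chain rule to $\HH=\HH(t,\gamma^i,\pi_i)$ along the trajectory and substituting the first two equations, after which the $\gamma^i$- and $\pi_i$-derivative contributions cancel pairwise. For the converse $(2)\Rightarrow(1)$, the same displayed formula for $\delta\I$ shows that the characteristic equations make the integrand vanish pointwise, hence $\delta\I(\gamma)=0$ for every admissible variation.

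The main obstacle I anticipate is bookkeeping rather than conceptual: one must be explicit that $\gamma^i$ and $\pi_i$ are varied independently in the Hamiltonian form of the action (otherwise the first equation, $\dot\gamma^i=\partial\HH/\partial\pi_i$, would appear as an unmotivated identification rather than an Euler--Lagrange condition), and one must justify why the boundary term in the integration by parts vanishes only because of the fixed-endpoint convention adopted for paths in $\Gamma$. A minor subtlety worth flagging is that the theorem's energy equation requires the Hamiltonian to be allowed explicit $t$-dependence; if $\HH$ were assumed time-independent, the equation would degenerate to the conservation statement $d\HH/dt=0$, which is consistent but less informative.
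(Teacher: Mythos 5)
Your computation is correct, and the reverse direction ($2\Rightarrow 1$) is essentially the calculation the paper itself performs: it applies the variational operator to $\LL=\pi_i\dot\gamma^i-\HH$, integrates by parts with vanishing endpoint variations, and substitutes the characteristic equations to get $\delta\LL=0$. Where you genuinely diverge is the forward direction. The paper does not carry out the phase-space variation at all for $1\Rightarrow 2$; instead it argues that $\delta\I(\gamma)=\int_{\delta\gamma}\omega=0$ means the integral of the characteristic $1$-form is invariant, identifies this with Cartan's Principle, and then invokes the earlier Characteristic Equations theorem (proved via the Cartan tube and the contraction $\langle\Omega|\phi_s\rangle=0$) to conclude. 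Your route---collecting the coefficients of the independent variations $\delta\gamma^i$ and $\delta\pi_i$ after integration by parts and applying the fundamental lemma of the calculus of variations---is more self-contained and, frankly, tighter: the paper's identification of a fixed-endpoint path variation with the closed-circle invariance demanded by Cartan's Principle is left implicit and is not entirely immediate, whereas your single displayed formula for $\delta\I$ yields both implications at once. You also make explicit two points the paper glosses over in this proof: that the first Hamilton equation arises only because $\pi_i$ is varied independently of $\gamma^i$ (with no boundary condition needed on $\delta\pi_i$), and that the third equation $d\HH/dt=\partial\HH/\partial t$ is not an independent variational condition but follows from the chain rule once the first two hold. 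What the paper's route buys in exchange is coherence with its geometric program: by funneling the extremal condition through Cartan's Principle it exhibits the equivalence of the maximum/extremal formulation with the coordinate-free statement $\langle\Omega|\phi_s\rangle=0$, which is the structural point the rest of the paper builds on.
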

\begin{proof}
$1 \Rightarrow 2$: Applying a variational operator to the action, we obtain
\begin{eqnarray*}
\delta\I(\gamma)&=&\delta\int_\gamma\omega,\\
&=&\int_{\delta\gamma}\omega=0,
\end{eqnarray*}
implying that the integration of the characteristic 1-form is independent of $\delta\gamma$ and satisfies Principle \ref{calpri}. Then, the Hamiltonian must satisfy the characteristic equations. \\
$2 \Rightarrow 1$: Applying the variational operator to the Lagrangian, we obtain
\begin{eqnarray*}
\delta\LL&=&\delta(\pi_i\dot{\gamma}^i)-\delta_\HH,\\
&=&\delta\pi_i\dot{\gamma}^i+\pi_i\delta\dot{\gamma}^i
-\left(\frac{\partial \HH}{\partial \gamma^i}\delta\gamma^i+\frac{\partial \HH}{\partial \pi_i}\delta\pi_i\right),\\
&=&\delta\pi_i\dot{\gamma}^i+\pi_i\delta\dot{\gamma}^i
-\left(-\dot{\pi}_i\delta\gamma^i+\dot{\gamma}^i\delta\pi_i\right),\\
&=&\delta\pi_i\dot{\gamma}^i+\pi_i\delta\dot{\gamma}^i
-\left(\pi_i\delta\dot{\gamma}^i+\dot{\gamma}^i\delta\pi_i\right)=0,
\end{eqnarray*}
where $i=1,2,3$. Here we use integration by parts and assume zero variations at both ends of the path.
Steps 2 and 3 in the derivation are obtained by substituting the characteristic equations.
\end{proof}
\noindent
The variation of the action, on the other hand, can be written as $\delta\I(\gamma)=\int dt\delta\LL$, which yields $\delta\I(\gamma)=0~\Rightarrow~\delta\LL=0$. This is analogous to the extremal of the Gibbs free energy in an isothermal reversible system at equilibrium. Let us consider this analogy in more detail. As pointed out, when introducing eq.(\ref{HFE}), the characteristic 1-form of the isothermal reversible transition is equivalent to the Gibbs free energy. In this analogy, the Hamiltonian corresponds to the entropy of the system and the macroscopic system configuration may be determined by the extremal point of the characteristic 1-form, which corresponds to the Lagrangian. The analogies between the Hamiltonian formalism of mass points and thermodynamics of ideal gases is summarized in Table~\ref{tbl1}.
\begin{table}[t]
\begin{center}
 \caption{\label{tbl1}
Analogies between isothermal reversible thermodynamics and statistical dynamics of particle paths.
}
   \begin{tabular}{lccc}
~& adiabatic  &
isothermal &statistical dynamics \\ 
~&free expansion&reversible transition&of paths \\ \hline
{\bf Order parameter} & entropy:$S$&temperature:$T$&time:$t$ \\
{\bf Momentum vector}&pressure:$p$& volume:$V$ & momentum:$\pi$ \\
{\bf Space coordinate}&volume:$V$ & pressure:$p$& position:$\gamma(t)$ \\
{\bf Hamiltonian}&temperature:$T$&entropy:$S$ & Hamiltonian:$\HH$\\
{\bf Lagrangian}&internal energy:&Gibbs free energy: & Lagrangian: \\ 
~&$-dU=p~dV-TdS$&$dG=Vdp-SdT$ & $\LL dt=\pi d\gamma-\HH dt$ \\
   \end{tabular}
\end{center}
\end{table}
This analogy will be pursued further in the next subsection.
\subsection{Statistical Mechanics of Trajectories}
The curvilinear path defined in the previous subsection is the trajectory of the particle. This trajectory is considered as the microscopic basis for constructing a statistical mechanical analog of the thermodynamic system.
\begin{definition}\label{lag1}{\bf (Lagrangian Probabilistic Space)}\label{psp}\\
The Probabilistic space $\{\Gamma,{\mathfrak P}(\Gamma),p(\gamma) \}$ is imposed on the general dynamic path space $($see {\bf definition \ref{Laglangespace}}$)$ as follows:
\end{definition}
\begin{enumerate}
\item {\bf Whole event} The set of whole events is the curvilinear-path space.
\item {\bf $\sigma$-algebra}: The $\sigma$-algebra  is a power set of $\Gamma$, denoted ${\mathfrak P}(\Gamma)$.
\item {\bf Probability Measure}: Consider a map such that
\begin{eqnarray*}
p:\Gamma\rightarrow[0,1]\in\R:\gamma\mapsto p(\gamma),
~~\sum_{\gamma\in\Gamma}p(\gamma)=1.
\end{eqnarray*}
\end{enumerate}
This functional describes the probability density to realize the path $\gamma$. Please note that we consider only those paths whose start and end points are fixed at $\bm\xi_1$ and $\bm\xi_2$, respectively. The initial momentum vector must be chosen to realize a classical path.
The probabilistic space $\{\Gamma,{\mathfrak P}(\Gamma),p(\gamma) \}$ is assumed to satisfy the probability axioms proposed by Kolmogorov\cite{Kolmogorov}. If a set $\Gamma$ with infinite degrees of freedom belongs to $\aleph_1$, then a set ${\mathfrak P}(\Gamma)$ must belong to $\aleph_2$. Since we cannot mathematically justify the probability measure defined on such an infinite set, we present a formal treatment only. The existence of the measure can be verified in limited cases, as will be discussed later.
\begin{definition}\label{EntpyPath}{\bf (Entropy of Paths)}\\
The entropy of the paths is defined as
\begin{eqnarray}
{\cal S}&=&-\sum_{\gamma\in\Gamma}p(\gamma)\log{p(\gamma)}\label{EntpyEq},
\end{eqnarray}
according to Shannon{\rm \cite{Shannon1948}}. Here $\Gamma$, $\gamma\in\Gamma$ and $p(\gamma)$ are defined in the {\bf Definition~\ref{psp}}
\end{definition}
\noindent
Consistency between the above definition of entropy and Hamiltonian formalism will be discussed later. At this stage, we lack detailed knowledge of the dynamics that govern path behavior. However, it seems natural to configure paths by the following principle.
\begin{principle}\label{MEP}{\bf (Maximum Entropy Principle)}\\
Path configuration is determined to maximize the entropy of the paths.
\end{principle}
\noindent
According to above principle, a probability $p(\gamma)$ to observe a path $\gamma$ can be given by following theorem.
\begin{theorem}\label{CEB}{\bf (Canonical Ensemble)}\\
The probability $p(\gamma)$ defined as following two requirement are equivalent:
\begin{enumerate}
\item The path whose entropy is maximized under the following constraints
\begin{eqnarray}
\sum_{\gamma\in\Gamma}p(\gamma)=1\label{CE1},\\
\sum_{\gamma\in\Gamma}p(\gamma)\I(\gamma)=I\label{CE2}.
\end{eqnarray}
The first constraint is conservation of probability. The second stipulates that an action averaged over all possible paths $I$, called a {\it classical action}, must exist.
\item The path whose probability $p(\gamma)$ is given as
\begin{eqnarray}
p(\gamma)&=&
\frac{\exp{\left(-\beta\I(\gamma)\right)}}{Z(\beta)}\label{CE3},\\
Z(\beta)&=&
\sum_{\gamma\in\Gamma}\exp{\left(-\beta\I(\gamma)\right)},\label{CE4}
\end{eqnarray} 
where $\beta$ is a constant to eliminate a dimension in the argument of exponential function.
\end{enumerate}
\end{theorem}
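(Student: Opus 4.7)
The plan is to prove both implications using the standard Lagrange-multiplier/Gibbs-inequality technique, adapted to the formal ``sum over paths'' notation used in the excerpt.

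For the direction $1 \Rightarrow 2$, I would apply the method of Lagrange multipliers to the constrained optimization problem. Introduce multipliers $\alpha$ and $\beta$ for the constraints (\ref{CE1}) and (\ref{CE2}), respectively, and consider the functional
\begin{eqnarray*}
F[p]=-\sum_{\gamma\in\Gamma}p(\gamma)\log p(\gamma)
-\alpha\Bigl(\sum_{\gamma\in\Gamma}p(\gamma)-1\Bigr)
-\beta\Bigl(\sum_{\gamma\in\Gamma}p(\gamma)\I(\gamma)-I\Bigr).
\end{eqnarray*}
Differentiating formally with respect to $p(\gamma)$ and setting the result to zero yields $-\log p(\gamma)-1-\alpha-\beta\I(\gamma)=0$, so $p(\gamma)=\exp(-1-\alpha-\beta\I(\gamma))$. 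Imposing the normalization (\ref{CE1}) absorbs the $-1-\alpha$ into the partition function, giving precisely (\ref{CE3}) with $Z(\beta)$ as in (\ref{CE4}). The second multiplier $\beta$ is fixed implicitly by the prescribed value of the mean action $I$.

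For the direction $2 \Rightarrow 1$, rather than re-solve the variational problem, I would invoke the non-negativity of the relative entropy (the Gibbs inequality). Let $p$ denote the canonical distribution (\ref{CE3}) and let $q$ be any other probability distribution on $\Gamma$ satisfying the constraints (\ref{CE1}) and (\ref{CE2}) with the same value of $I$. Then
\begin{eqnarray*}
\sum_{\gamma\in\Gamma}q(\gamma)\log\frac{q(\gamma)}{p(\gamma)}\ge 0,
\end{eqnarray*}
which, after substituting $\log p(\gamma)=-\beta\I(\gamma)-\log Z(\beta)$ and using that $p$ and $q$ share the same normalization and mean action, reduces to $\mathcal{S}[q]\le\mathcal{S}[p]$. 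A direct computation of $\mathcal{S}[p]=\beta I+\log Z(\beta)$ also confirms that $p$ realizes the extremum obtained in the first direction, which certifies it as a genuine maximum rather than a saddle.

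The main obstacle is not the algebra but the measure-theoretic underpinning: the space $\Gamma$ of paths is infinite-dimensional, and the ``sum'' $\sum_{\gamma\in\Gamma}$ is formal, as the excerpt itself acknowledges in the remarks following Definition~\ref{psp}. Consequently the variational derivative $\partial/\partial p(\gamma)$ and the convergence of $Z(\beta)$ need to be interpreted within whatever regularization the authors have in mind. I would therefore write the proof \emph{formally}, mirroring the textbook derivation for a countable ensemble, and flag that the same manipulations carry over to the continuous case provided a suitable functional measure on $\Gamma$ exists so that $Z(\beta)<\infty$ and the Gibbs inequality remains valid. Everything else is mechanical.
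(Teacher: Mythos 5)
Your forward direction ($1 \Rightarrow 2$) is essentially identical to the paper's own proof: the authors introduce the same Lagrange multipliers $\alpha$ and $\beta$, obtain the stationarity condition $\log p(\gamma)+1+\alpha+\beta\I(\gamma)=0$, and read off the canonical form with $Z(\beta)=\sum_{\gamma\in\Gamma}\exp(-\beta\I(\gamma))$, adding only the observation that $\beta>0$ is needed for convergence. Where you genuinely depart from the paper is the converse direction: the paper stops at stationarity and never verifies that the critical point is a maximum, nor does it address $2\Rightarrow 1$ as a separate implication at all. Your use of the Gibbs inequality $\sum_{\gamma}q(\gamma)\log\bigl(q(\gamma)/p(\gamma)\bigr)\ge 0$ to show $\mathcal{S}[q]\le\mathcal{S}[p]=\beta I+\log Z(\beta)$ for any competitor $q$ satisfying the same constraints is a strict improvement: it upgrades the extremum to a genuine global maximum and supplies the missing half of the claimed equivalence. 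Your caveat about the formal nature of $\sum_{\gamma\in\Gamma}$ over an infinite-dimensional path space matches the paper's own disclaimer following the definition of the Lagrangian probabilistic space, so nothing is lost there. In short, the proposal subsumes the paper's argument and closes a logical gap the paper leaves open.
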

\noindent
Here, the functional integration $Z(\beta)$ is regarded as a partition functional by analogy with equilibrium statistical mechanics. Then, particle trajectories (paths) form a canonical ensemble.
\begin{proof}
To maximize the entropy of the constrained paths, we introduce Lagrange multipliers $\alpha$ and $\beta$  such that
\begin{eqnarray*}
\phi(p,\alpha,\beta)&=&
-S
+\alpha\left(\sum_{\gamma\in\Gamma}p(\gamma)-1\right)
+\beta\left(\sum_{\gamma\in\Gamma}p(\gamma)\I(\gamma)-I\right).
\end{eqnarray*}
The following conditions must be satisfied:
\begin{eqnarray*}
\frac{\partial\phi(p,\alpha,\beta)}{\partial\alpha}&=&
\sum_{\gamma\in\Gamma}p(\gamma)-1=0,\\
\frac{\partial\phi(p,\alpha,\beta)}{\partial\beta}&=&
\sum_{\gamma\in\Gamma}p(\gamma)\I(\gamma)-I=0,\\
\frac{\partial\phi(p,\alpha,\beta)}{\partial p}&=&
\log{p(\gamma)}+1+\alpha+\beta\I(\gamma)=0.
\end{eqnarray*}
Here we used the functional derivative rules presented in the Appendix.
Solving the above equations gives
\begin{eqnarray*}
p(\gamma)&=&\frac{\exp{\left(-\beta\I(\gamma)\right)}}
{\sum_{\gamma\in\Gamma}\exp{\left(-\beta\I(\gamma)\right)}}.
\end{eqnarray*}
To ensure that $\exp{\left(-\beta\I(\gamma)\right)}$ converges when $\I(\gamma)\rightarrow\infty$, $\beta$ must be positive.
\end{proof}
\noindent
The above theorem implies that introducing the entropy of paths described by eq. (12) yields the canonical ensemble of equilibrium statistical mechanics. Therefore, it appears that the classical trajectory of a mass point can be interpreted as the equilibrium state among all possible paths. The following theorem should then naturally hold:
\begin{theorem}\label{EoM}{\bf (the most probable path)}\\
The following two trajectories are equivalent:
\begin{enumerate}
\item Trajectory that gives the extremal of variation of the action: $\delta\I(\gamma)=0$.
\item Trajectory that gives the maximum probability in eq.$($\ref{CE3}$)$.
\end{enumerate}
\end{theorem}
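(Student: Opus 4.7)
The plan is to exploit the monotonic relationship between the probability functional $p(\gamma)$ and the action $\I(\gamma)$. Because $p(\gamma) = e^{-\beta \I(\gamma)}/Z(\beta)$ with $\beta > 0$ and $Z(\beta)$ independent of the particular path, the map $\I \mapsto p$ is a strictly decreasing smooth function. Extremality of one functional with respect to variations of $\gamma$ should therefore translate directly into extremality of the other, and the sign of $\beta$ fixes the correspondence of maxima and minima.

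Concretely, I would first apply the variational operator of \textbf{Definition \ref{henbun}} to the expression (\ref{CE3}). Using the chain rule for functionals and the fact that $Z(\beta)$ carries no $\gamma$-dependence, this gives
\begin{eqnarray*}
\delta p(\gamma) \;=\; -\beta\,\frac{e^{-\beta \I(\gamma)}}{Z(\beta)}\,\delta\I(\gamma) \;=\; -\beta\, p(\gamma)\, \delta\I(\gamma).
\end{eqnarray*}
Since $p(\gamma) > 0$ on $\Gamma$ and $\beta > 0$ by the last line of the proof of \textbf{Theorem~\ref{CEB}}, the relation $\delta p(\gamma)=0$ holds if and only if $\delta\I(\gamma) = 0$. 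This gives the equivalence of the two sets of stationary paths in both directions immediately, without invoking the characteristic equations a second time.

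To upgrade ``stationary point of $p$'' to ``maximum of $p$'', I would then take a second variation: $\delta^{2} p = -\beta p\,\delta^{2}\I + \beta^{2} p (\delta\I)^{2}$, which at a stationary point of $\I$ reduces to $\delta^{2} p = -\beta p\, \delta^{2}\I$. Hence a local minimum of the action ($\delta^{2}\I > 0$) is a local maximum of the probability, consistent with the standard statement of Hamilton's principle used in \textbf{Theorem~\ref{Hamilton}} that identifies the classical trajectory with the minimizer of $\I$ among nearby admissible paths with fixed endpoints.

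The only genuinely delicate point is the distinction between ``extremal'' and ``maximal'': the statement of the theorem pairs an extremal of $\I$ with a maximum of $p$, so one must be careful that the monotone correspondence is read with the correct sign. The positivity of $\beta$ handles this, but I would explicitly comment on it, since without $\beta>0$ the classical trajectory would correspond to a minimum of $p$ rather than a maximum and the thermodynamic analogy of Table~\ref{tbl1} would break. Beyond this, the argument is a direct computation; no delicate issue concerning the measure on $\Gamma$ arises because only pointwise variations of $p$ at a fixed $\gamma$ are needed.
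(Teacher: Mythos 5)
Your proposal is correct and follows essentially the same route as the paper: both apply the variational operator to eq.(\ref{CE3}), use the $\gamma$-independence of $Z(\beta)$ to get $\delta p(\gamma)=-\beta\, p(\gamma)\,\delta\I(\gamma)$, and invoke positivity of $\beta$, $Z$, and $e^{-\beta\I(\gamma)}$ to conclude the equivalence and the pairing of action minima with probability maxima. Your explicit second-variation computation merely makes rigorous what the paper asserts via monotonicity, so no substantive difference exists.
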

\begin{proof}
Applying the variational operator $\delta$ to both sides of eq.(\ref{CE3}), we obtain
\begin{eqnarray*}
\delta p(\gamma)&=&-\beta\frac{\exp{\left(-\beta\I(\gamma)\right)}}{Z}\delta\I(\gamma).
\end{eqnarray*}
Thus, $\delta p(\gamma)=0$ and $\delta\I(\gamma)=0$ are equivalent. Since $\beta$, $Z$, and $\exp{(-\beta\I(\gamma))}$ are positive, the path $\gamma$ that minimizes $\I(\gamma)$ gives the maximum $p(\gamma)$.
\end{proof}
\noindent
Above theorem posits that the classical trajectory described in{\bf Theorem \ref{Hamilton}} is the most probable path of a mass point under {\bf Principle \ref{MEP}}. Thus, we have proved that the maximum entropy principle ({\bf Principle \ref{MEP}}) and Cartan principle ({\bf Principle \ref{calpri}}) are mathematically equivalent in a general dynamic space.
\section{Classical Mechanics to Quantum Mechanics}
This section relates the formulations of classical and quantum mechanics. Quantum and classical mechanics  differ most distinctly by the probability amplitude, whose square gives probability density. Quantum mechanical  motions, embodied in the Heisenberg/Schr\"{o}dinger equation, are governed by the time evolution of probability amplitudes rather than probability densities. Importance of the quantum probability amplitude is discussed elsewhere\cite{2013arXiv1304.5824K}. The following subsection introduces a general framework of quantum mechanics, within which we relate our formulation to path integrals and stochastic quantization.
\subsection{General Quantum System}
Here we develop a general framework for defining quantum mechanical probability amplitudes. 
\begin{definition}\label{gcy}{\bf (Quantum Amplitude\footnote{In a narrow sense,``quantum amplitude'' is a complex number whose square of the absolute value is a probability (density). In this report, we use a word ``quantum amplitude'' not only for complex numbers, but also for vectors whose square of the absolute value is a probability (density).} and Probability)}\\
Let $\K$ be any field, not necessarily commutative, and let $V$ be a linear (vector) space on it. The base field $\K$ is associated with each element of $\Gamma$. The probability amplitude and probability measure are introduced on these spaces as follows:
\begin{enumerate}
\item  We introduce the following map from the path to an n-dimensional vector space:
\begin{eqnarray*}
\psi^n_\K:\Gamma\rightarrow V^n=\underbrace{V\otimes\cdots\otimes V}_{n}:\gamma\mapsto\psi^n_\K(\gamma),
\end{eqnarray*}
where the paths have fixed end points at $\bm\xi_1$ and $\bm\xi_2$ {\rm (}see {\bf Definition \ref{curvsp}}{\rm )}. The amplitude is then defined as
\begin{eqnarray*}
\psi^n_\K(\bm\xi_2;\bm\xi_1)&=&\sum_{\gamma\in\Gamma}\psi^n_\K(\gamma).
\end{eqnarray*}
In defining a measure on the vector space, a $\sigma$-algebra of the probability amplitudes in the base field $\K$ is assumed. Hereafter, the simplified notation 
$
\psi^n_\K
$
denotes that the start and end points are fixed at $\bm\xi_1$ and $\bm\xi_2$, respectively, unless otherwise stated. 
\item The following map from an amplitude to a real number 
\begin{eqnarray*}
\mu:V^n\rightarrow\R:\psi^n_\K
\rightarrow \mu\left(\psi^n_\K\right)\in[0,1],
\end{eqnarray*}
is called a quantum probability (measure). The sequential map 
\begin{eqnarray*}
\mu\circ\psi^n_\K:\Gamma\rightarrow(V^n\rightarrow)\R:\Gamma\mapsto \mu
=\mu\left(\Gamma\right)=\mu(\psi^n_\K),
\end{eqnarray*}
is also called a quantum probability and is represented by the same symbol $\mu(\gamma)$, where $\K$ and $V$ are as previously defined. The quantum probability measure must satisfy
$
\mu(\psi^n_\K)\leq 1.
$
This measure is not normalized to unity because the curvilinear-path space $\Gamma$ includes only paths with fixed end points; however, the quantum mechanical uncertainty relation precludes precise determination of an end point.
\end{enumerate}
\end{definition}
\noindent
Two essential differences exist between the above-described general quantum system and canonical ensemble introduced earlier:
\begin{enumerate}
\item Relaxation of the first constraint in {\bf Theorem \ref{CEB}.1}. 
\item The probability amplitude is not necessarily a real number.
\end{enumerate}
These differences may lead the dynamical system to adopt quantum mechanical instead of classical behavior.
\subsection{Path Integral Quantization}
Here we derive the probability measure and amplitude from the maximum entropy principle ({\bf Principle \ref{MEP}}). The probability that a mass point observed at ${\bm \xi_1}$ is later observed at ${\bm \xi_2}$ is  
\begin{eqnarray}
p(\bm \xi_2,\bm \xi_1)=\mu(\Gamma)
=\Bigl|\sum_{\gamma\in\Gamma}\psi^1_\C(\gamma)\Bigr|^2\label{trans},
\end{eqnarray}
where the quantum probability amplitude $\psi^1_\C(\gamma)$ resides in a one-dimensional vector space on a complex number field $\C$. Hereafter, $\psi^1_\C(\gamma)$ is written as $\psi(\gamma)$ for simplicity. 
\begin{theorem}\label{pathint}{\bf (Path Integral Quantization)}\\
The quantum probability amplitude and probability measure that minimize the entropy 
\begin{eqnarray}
S&=&-\sum_{\gamma\in\Gamma}\Bigl|\psi(\gamma)\Bigr|^2
\log{\Bigl|\psi(\gamma)\Bigr|^2}\label{EntpyEq2},
\end{eqnarray}
under constraints 
\begin{eqnarray}
\sum_{\gamma\in\Gamma}\Bigl|\psi(\gamma)\Bigr|^2\I(\gamma)&=&I\label{CE5},\\
\sum_{\gamma\in\Gamma}\Bigl|\psi(\gamma)\Bigr|^2&=&1,
\label{PathI1}
\end{eqnarray}
is given by
\begin{eqnarray}
\psi(\gamma)&\simeq&C~e^{\frac{i}{\hbar}\I(\gamma)}\label{PathI2},
\end{eqnarray}
where $C\in\R$ is an appropriate normalization constant and $\I$ is rendered dimensionless by dividing by the constant $\hbar$.
\end{theorem}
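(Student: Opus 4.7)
The plan is to repeat the Lagrange-multiplier derivation of Theorem~\ref{CEB} in the complex-amplitude setting of Definition~\ref{gcy}. I would form the auxiliary functional
\begin{equation*}
\phi(\psi,\psi^{*},\alpha,\beta)=-S+\alpha\Bigl(\sum_{\gamma\in\Gamma}|\psi(\gamma)|^{2}-1\Bigr)+\beta\Bigl(\sum_{\gamma\in\Gamma}|\psi(\gamma)|^{2}\I(\gamma)-I\Bigr),
\end{equation*}
where $\alpha$ and $\beta$ enforce the two constraints (\ref{PathI1}) and (\ref{CE5}). Treating $\psi$ and $\psi^{*}$ as independent variables (the standard Wirtinger convention for complex variations) and using the identity $\partial|\psi|^{2}/\partial\psi^{*}=\psi$, I would impose $\delta\phi/\delta\psi^{*}(\gamma)=0$. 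The resulting Euler--Lagrange condition reads $\psi(\gamma)\bigl[-(1+\log|\psi(\gamma)|^{2})+\alpha+\beta\,\I(\gamma)\bigr]=0$, which for nonvanishing $\psi$ gives $|\psi(\gamma)|^{2}=\exp\bigl(-1+\alpha+\beta\,\I(\gamma)\bigr)$. Formally this is identical to the classical canonical ensemble of Theorem~\ref{CEB}, the only difference being that $\psi$ is now $\C$-valued.

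The decisive step that separates this proof from its classical analogue is the interpretation of the multiplier $\beta$. In Theorem~\ref{CEB} the multiplier was forced to be real and positive in order to guarantee a convergent, normalizable probability. Here both restrictions are lifted: as highlighted immediately after Definition~\ref{gcy}, the normalization constraint is \emph{relaxed} and the amplitude takes values in $\C$. Under these weaker hypotheses I would select the purely imaginary value $\beta=-2i/\hbar$, with $\hbar$ rendering the argument of the exponential dimensionless. Substituting this choice, taking a consistent branch of the complex square root of the stationarity equation, and absorbing the residual normalization into $C=\exp\bigl((-1+\alpha)/2\bigr)$ produces the advertised form $\psi(\gamma)\simeq C\exp\bigl(\tfrac{i}{\hbar}\I(\gamma)\bigr)$.

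The main obstacle I foresee is the legitimization of the analytic continuation $\beta\mapsto -2i/\hbar$. As soon as $\beta$ is purely imaginary, $|\psi(\gamma)|^{2}$ collapses to a $\gamma$-independent constant, so the strict normalization (\ref{PathI1}) cannot be enforced on the infinite curvilinear-path space $\Gamma$ --- this tension is precisely what is being signaled by the symbol ``$\simeq$'' in the conclusion rather than a genuine equality. A fully rigorous argument would require either a carefully constructed measure on $\Gamma$ (recall the $\aleph_{2}$ caveat after Definition~\ref{psp}) or an a-posteriori consistency check, namely that the resulting amplitude reproduces Feynman's path integral and collapses, via stationary phase as $\hbar\to 0$, onto the classical trajectory singled out by Theorem~\ref{Hamilton}.
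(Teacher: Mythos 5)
Your proposal follows essentially the same route as the paper: the identical Lagrange-multiplier functional, the Wirtinger derivative $\partial|\psi|^2/\partial\psi^*=\psi$ yielding the stationarity condition $\psi(\gamma)\left(\log|\psi(\gamma)|^2+1+\alpha+\beta\I(\gamma)\right)=0$, and the observation that the phase of $\psi$ is unconstrained by the variational problem so that $\beta$ may be complexified, with $\hbar=2\,\Im[\beta]^{-1}$. The only divergence is at the final step: where you set $\beta$ purely imaginary outright (and correctly note that this collapses $|\psi(\gamma)|^2$ to a $\gamma$-independent constant), the paper instead keeps $\Re[\beta]>0$ --- so that $|\psi|^2\propto e^{-\Re[\beta]\I(\gamma)}$ remains the canonical ensemble of {\bf Theorem \ref{CEB}}, peaked on the classical path --- and then argues that the factor $e^{-\Re[\beta]\I(\gamma)/2}$ is slowly varying and may be replaced by its mean value and absorbed into $C$, which is where the paper's ``$\simeq$'' actually originates rather than in the normalization tension you identify.
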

\begin{proof}
The probability amplitude that maximizes the entropy is again obtained by the Lagrange multiplier method:
\begin{eqnarray*}
\phi(\psi,\alpha,\beta)&=&-S+\alpha\left(
\sum_{\gamma\in\Gamma}\Bigl|\psi(\gamma)\Bigr|^2-1
\right)
+\beta
\left(
\sum_{\gamma\in\Gamma}\Bigl|\psi(\gamma)\Bigr|^2\I(\gamma)-I
\right),\\
\frac{\partial\phi(\psi,\alpha,\beta)}{\partial\alpha}
&=&\sum_{\gamma\in\Gamma}\Bigl|\psi(\gamma)\Bigr|^2-1=0,\\
\frac{\partial\phi(\psi,\alpha,\beta)}{\partial\beta}
&=&\sum_{\gamma\in\Gamma}\Bigl|\psi(\gamma)\Bigr|^2\I(\gamma)-I=0,\\
\frac{\partial\phi(\psi,\alpha,\beta)}{\partial\psi^*(\gamma)}&=&
\psi(\gamma)\left(\log{\Bigl|\psi(\gamma)\Bigr|^2}+1
+\alpha+\beta\I(\gamma)\right)=0.
\end{eqnarray*}
Here we again used rules of a functional derivative described in Appendix.
Solving the above equations, the probability amplitude is obtained as
\begin{eqnarray*}
\psi(\gamma)&=&e^{-\frac{1}{2}\beta\I(\gamma)}/Z^{1/2},\\
Z&=&\sum_{\gamma\in\Gamma}e^{-\Re[\beta]\I(\gamma)}.
\end{eqnarray*}
Here $\beta\in\C$ and $\Re[\beta]$ ($\Im[\beta]$) are real-part (imaginary) of $\beta$, respectively. An imaginary part of $\beta$ is arbitrary due to an $U(1)$ symmetry of $\psi(\gamma)$.
As shown in the {\bf Theorem \ref{EoM}}, a main contribution for the probability comes from the trajectory of the classical mass point. Moreover it is expected that a quantum mechanical path is small fluctuated around this classical trajectory.
Thus here we assume that a real-part of $\beta$ contributes to $\psi(\gamma)$ as slow moving function and replace it by a mean value. Then the probability amplitude can be express as
\begin{eqnarray*}
\psi(\gamma)&=&\frac{1}{Z^{1/2}}e^{-\frac{1}{2}\beta\I(\gamma)},\\
&=&\frac{1}{Z^{1/2}}
e^{-\frac{1}{2}\left(\Re[\beta]+i\Im[\beta]\right)\I(\gamma)},\\
&\simeq&C e^{-\frac{i}{\hbar}\I(\gamma)}.
\end{eqnarray*}
Where $\hbar=2\Im[\beta]^{-1}$ and
\begin{eqnarray*}
C&=&\frac{1}{Z^{1/2}}
\sum_{\gamma\in\Gamma}e^{-\frac{1}{2}\Re[\beta]\I(\gamma)}.
\end{eqnarray*}
\end{proof}
\noindent
Eqs.(\ref{PathI2}) and  (\ref{trans}) are nothing but the path-integral representation of transition probability introduced by Feynman\cite{feynman1942feynman,feynman1965quantum}\footnote{See also a section 1.3 of \cite{kaku1999introduction}}. The constant $\hbar$ is not necessarily the Plank constant and cannot be determined within this formulation. Instead, a transition from classical to quantum mechanics arises through the probability amplitude, which can be a complex valued functional in our interpretation, in contrast to the real probability density of classical mechanics. This transition becomes evident if eqs.$(\ref{EntpyEq})$ and $(\ref{CE2})$ are compared with eqs.$(\ref{EntpyEq2})$ and $(\ref{CE5})$.
\section{Conclusions and Discussions}
In this report, we introduced a general dynamic space that allows a unified geometric viewpoint of various dynamic systems. System dynamics were geometrically introduced though Cartan's principle. The equations of motion derived from Cartan's principle were found to be mathematically equivalent to Hamiltonian dynamics. Under the proposed generalized framework, the dynamics of a mass point were equivalent to those of equilibrium thermodynamics, enabling the derivation of a thermodynamic analogue of mass point dynamics. In fact, the maximum entropy principle defined in trajectory space generated precisely the Hamiltonian equation of motion. The classical trajectory of a mass point can be interpreted as the most probable path of the point. By extending the maximum entropy principle to probability amplitude rather than probability density, we retrieved the equations of path-integrated quantum mechanics. The probability amplitude was essential for transferring the system from a classical to quantum state. In summary, we incorporated various dynamical systems such as classical mechanics of a mass point and equilibrium thermodynamics and quantum mechanics of a point particle into a general mathematical framework. \\
While this framework provides a unique vantage point for both classical and quantum mechanics, it is not yet suitable for quantum mechanical analysis. We defined quantum amplitudes on a curvilinear space of precisely fixed end points. A basic quantum mechanical element is not naturally located in such a space-time manifold because of violation of the uncertainty relation. To satisfy the uncertainty relation, the  essential element of quantum mechanics must be defined on a measurable space. A suitable candidate manifold is the Cartan tube introduced in Section\ref{DYonSyMan}, which is defined in a measurable extended phase space. The elements of this space, space-time and momentum manifolds comprise a Fourier-dual pair\cite{kurihara}. Thus, we can expect to construct quantum mechanics that satisfy the uncertainty condition on the Cartan tube. Moreover this formalism is suitable to treat quantum field theory, which is considered as a more fundamental theory than quantum mechanics of a mass point. A detailed analysis of this subject is beyond the scope of this report and will be reported elsewhere.
\section*{Acknowledgments}
We wish to thank to Dr. Y.~Sugiyama, Profs. T.~Kon, and J.~Fujimoto for their continuous encouragement and fruitful discussions. The authors would like to thank Enago (www.enago.jp) for the English language review.

%
%
\newpage
\appendix
\section{Algebraic functional calculus}
The functional differential and integral calculus used in this report is not directly extendable to  an infinite dimensional space. However, an algebraic treatment of the functional derivatives required in this report is sufficient.
\subsection{Algebraic differentiation}\label{algdif}
\begin{definition}{\bf (Algebraic differentiation($1$-parameter)}\\
Algebraic differentiation is a map $Z$ from the vector, defined on an open neighborhood $U$ about the point $p$ on manifold $M$, to a complex number. The map $Z$ must satisfy three algebraic conditions:
\begin{itemize}
\item Unity:~$Z(x)=1$,
\item Linearity:~$Z(a f+b g)=a Z(f)+b Z(g)$,
\item Leibniz's rule:~$Z(g f)=f~Z(g)+Z(f)~g$,
\end{itemize}
where $x$ is a local coordinate on $U$, $f,g\in V$, and $a,b \in \R$. 
\end{definition}
A differential operator acting on local coordinates $x$ on $U$ is written as $Z=\frac{d}{dx}$ 
\begin{example}{\bf (Constant functions)}\\
Given that
\begin{eqnarray*}
&~&\frac{d}{dx}(1)=\frac{d}{dx}(1 \times 1)
=1 \times \frac{d}{dx}(1)+\frac{d}{dx}(1) \times 1=2\frac{d}{dx}(1), \\
&\Rightarrow& \frac{d}{dx}(1)=0,
\end{eqnarray*}
and that the derivative of the constant function $f=1$ is zero, it immediately follows that the derivative of any constant function is zero.
\end{example}
\begin{example}{\bf (Polynomials)}\\
The derivative of $f(x)=x^2$ is
\begin{eqnarray*}
&~&\frac{d}{dx}(x^2)=\frac{d}{dx}(x \times x)=x\frac{d}{dx}(x)+\frac{d}{dx}(x)x
=2x.
\end{eqnarray*}
Then, by mathematical induction on $n$, $\frac{d}{dx}(x^n)=n x^{n-1}$ for any $\Z\ni n\neq 0$ .
\end{example}
\begin{example}{\bf(Power function)}\\
The derivative of $f^2(x)$ is
\begin{eqnarray*}
\frac{d}{dx}(f^2)&=&\frac{d}{dx}(f \times f)
=\frac{d}{dx}(f) \times f+f \times \frac{d}{dx}(f)=2f \frac{d}{dx}(f).
\end{eqnarray*}
Then, by mathematical induction on $n$, $\frac{d}{dx}(f^n)=n f^{n-1}\frac{d}{dx}(f)$ for any $n\neq 0$
\end{example}
\begin{example}{\bf (Exponential function)}\\
An exponential function $\exp(x)$ is defined as an identity function of the differentiation operator. It is lower-bounded by $\exp(0)=1$ such that
\begin{eqnarray*}
\frac{d}{dx}(\exp(x))=\exp(x),~\exp(0)=1.
\end{eqnarray*}
On the other hand, the infinite series
\begin{eqnarray*}
{\rm Exp}(x)=\sum_{k=0}^\infty \frac{1}{k!} x^k,
\end{eqnarray*}
satisfies the same differential equation and boundary condition. Thus, $\exp(x)$ is equivalent to ${\rm Exp}(x)$ and both are hereafter expressed as ${\rm exp}(x)$.
\end{example}
\begin{example}\label{Dlog}{\bf (Logarithmic function)}\\
The logarithmic function is the inverse of the exponential function, such that
$
\exp(\log(x))=\log(\exp(x))=x.
$
Differentiating the right-hand side of the formula, we get
\begin{eqnarray*}
\frac{d}{dx}(\exp(\log(x)))=\frac{d}{dx}(\log(x)) \exp(\log(x))
=\frac{d}{dx}(\log(x))x.
\end{eqnarray*}
On the other hand, differentiating the left-hand side yields
$
\frac{d}{dx}(x)=1.
$
Equating these, we obtain
$
\frac{d}{dx}(\log(x))=\frac{1}{x}~(x\neq 0).
$
.
\end{example}
\subsection{Integration: inverse operation of differentiate}
\begin{definition}{\bf (Primitive function)}\\
A function $F(x)$ satisfying
\begin{eqnarray*}
\frac{d}{dx}(F(x))=f(x),
\end{eqnarray*}
is called a primitive function.
A map homologizing a function $f(x)$ to a primitive function $F(x)$, i.e.,
\begin{eqnarray*}
\int dx:V \rightarrow V:f(x) \mapsto \left(\int dx \right)f(x)=F(x)
\end{eqnarray*}
is called an integration.
\end{definition}
An operator $\int dx$ maps a function to its primitive function.
\begin{definition}{\bf (Definite integral)}\\
The map 
\begin{eqnarray*}
\int_\bullet^\bullet dx:V \otimes \R \otimes \R
\rightarrow \R:(f(x),\{a,b\}) \mapsto \left(\int_a^b dx \right)f(x)=
F(b)-F(a)
\end{eqnarray*}
is called a definite integral.
\end{definition}
Leibniz's rule gives rise to the following theorem:
\begin{theorem}{\bf (Integration by parts)}
\begin{eqnarray*}
\int_a^b dx \left(f(x) g(x)\right)=
\left[F(x) g(x) \right]_a^b -  \int_a^b dx \left( F(x) \frac{d}{dx}g(x)\right),
\end{eqnarray*}
where $F(x)$ is a primitive function of $f(x)$.
\end{theorem}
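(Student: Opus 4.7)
The plan is to derive the identity directly from Leibniz's rule (the third axiom of algebraic differentiation) together with the definition of a primitive function and of the definite integral. All necessary pieces are already stated in the appendix, so the argument is short and algebraic; the only care needed is to align the manipulation with the framework the paper uses, where integration is literally the inverse of differentiation and the definite integral is $F(b)-F(a)$ for any primitive $F$.

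First, I would apply Leibniz's rule to the product $F(x)g(x)$, using that $F$ is a primitive of $f$ so that $\tfrac{d}{dx}F(x)=f(x)$. This gives
\begin{eqnarray*}
\frac{d}{dx}\bigl(F(x)g(x)\bigr) \;=\; \frac{d}{dx}F(x)\cdot g(x) + F(x)\cdot\frac{d}{dx}g(x) \;=\; f(x)g(x) + F(x)\frac{d}{dx}g(x).
\end{eqnarray*}
Thus $F(x)g(x)$ is itself a primitive function of $f(x)g(x)+F(x)\tfrac{d}{dx}g(x)$.

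Next, I would apply the definition of the definite integral to both sides over $[a,b]$ and use linearity of integration (inherited from linearity of $\tfrac{d}{dx}$, since the integral operator is its inverse). By definition of the definite integral, the left-hand side evaluates to $[F(x)g(x)]_a^b$, while the right-hand side splits into $\int_a^b dx\,(f(x)g(x)) + \int_a^b dx\,(F(x)\tfrac{d}{dx}g(x))$. Rearranging to solve for $\int_a^b dx\,(f(x)g(x))$ yields exactly the stated formula.

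The proof is essentially a one-line application of Leibniz followed by integration, so there is no real obstacle; the only subtle point is the \emph{justification of linearity of the integral}, which is not stated as an axiom but follows immediately from linearity of $\tfrac{d}{dx}$ and the defining property that $\int dx$ maps a function to its primitive. I would note this explicitly before splitting the integral, so that the reader sees the proof stays entirely within the algebraic framework of Section~\ref{algdif} and does not tacitly invoke any analytic machinery beyond it.
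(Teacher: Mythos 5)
Your proof is correct and follows exactly the route the paper intends: the paper offers no explicit proof beyond the remark that ``Leibniz's rule gives rise to the following theorem,'' and your argument (apply Leibniz's rule to $F(x)g(x)$, recognize $F(x)g(x)$ as a primitive of $f(x)g(x)+F(x)\frac{d}{dx}g(x)$, then evaluate the definite integral and rearrange) is precisely the derivation being alluded to. Your explicit note that linearity of the integral must be inherited from linearity of $\frac{d}{dx}$ is a worthwhile addition that the paper leaves implicit.
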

\begin{example}~\\
As an example, we integrate the function $f(x)=x \exp(-x^2)$ by parts. For $^\forall c \in \R$, we have 
\begin{eqnarray*}
&~&\frac{d}{dx}\left(-\frac{1}{2}\exp(-x^2)+c\right)=x \exp(-x^2),\\
&~&\Rightarrow F(x)=\int dx~f(x)=-\frac{1}{2}\exp(-x^2)+c
\end{eqnarray*}
Performing the definite integration, we obtain
\begin{eqnarray*}
\int_{-\infty}^\infty dx~f(x)=-\frac{1}{2}\exp(-x^2) \Bigl|_{x=+\infty}-\left(-\frac{1}{2}\exp(-x^2) \Bigl|_{x=-\infty} \right)=0.
\end{eqnarray*}
Here convergence of the limit can be confirmed by expressing $exp(x)$ as an infinite series.
\end{example}
\begin{example}\label{DiracDelta}{\bf (Dirac $\delta$ function)}\\
The Heaviside unit function is defined as
\begin{eqnarray*}
\Theta(x)=
\begin{cases}
1&x\geq0\\
0&x<0.
\end{cases}
\end{eqnarray*}
The Dirac delta function is formally defined as
$
\delta(x)=\frac{d}{dx}\Theta(x).
$
For any function $f(x)$, we have
\begin{eqnarray*}
\left(\int_{-\infty}^\infty dx \right)\left(\delta(x) f(x)\right)&=&
\left[ \Theta(x)f(x) \right]_{-\infty}^{\infty}-\int_{-\infty}^\infty dx 
\left( \Theta(x) \frac{d}{dx}f(x) \right)\\
&=&-f(x) \bigl|_{x=\infty}-\int_{0}^\infty dx \frac{d}{dx}f(x)\\
&=&-f(x) \bigl|_{x=\infty}-\left[ f(x) \right]_0^\infty \\
&=&f(0),
\end{eqnarray*}
where we have used integration by parts.
\end{example}
\subsection{Functional calculus}~\\
In this report, differential calculus is applied on functionals. The calculus is treated algebraically and the operations are not checked for convergence.
\begin{definition}\label{funcdevdef}{\bf (Algebraic functional calculus)}\\
A functional differential $\frac{\delta}{\delta \phi(y)}$ is a linear operation that satisfies Leibniz's rule and
\begin{eqnarray*}
\frac{\delta \phi(x)}{\delta \phi(y)}=\delta(x-y).
\end{eqnarray*}
%
\end{definition}
Elementary functionals and their functional derivatives are defined and calculated following the methods of {\bf Appendix {\it A.1}}.
\begin{example}{\bf (Exponential functional)}\\
An exponential functional is an identity function of the differential operator. It is bounded by $\exp(\phi_0)=1$, where $\phi_0=0$.The exponential functional is therefore equivalent to
\begin{eqnarray*}
{\rm Exp}[\phi(x)]=\sum_{k=0}^\infty \frac{1}{k!} \phi^k(x).
\end{eqnarray*}
From the definitions of functional calculus, we obtain
\begin{eqnarray*}
\frac{\delta}{\delta \phi(y)}\exp{\left(\int dx~\phi(x) \varphi(x)\right)}&=&
\frac{\delta}{\delta \phi(y)}
\sum_{k=0}^\infty \frac{1}{k!} \left(\int dx~\phi(x) \varphi(x)\right)^k\\
&=&\sum_{k=0}^\infty \frac{1}{k!}i
\left(\int dx~\phi(x) \varphi(x)\right)^{k-1}
\varphi(y)\\
&=&\varphi(y)\exp{\left(\int dx~\phi(x) \varphi(x)\right)},
\end{eqnarray*}.
\end{example}

\begin{thebibliography}{10}

\bibitem{arnol1989mathematical}
V.I. Arnol'd, K.~Vogtmann, and A.~Weinstein.
\newblock {\em Mathematical Methods of Classical Mechanics}.
\newblock Graduate Texts in Mathematics. Springer, 1989.

\bibitem{caltan}
E.~Cartan.
\newblock Le{\c c}ons sur les invariants int{\' e}graux.
\newblock 1922.

\bibitem{feynman1942feynman}
R.P. Feynman and L.M. Brown.
\newblock {\em Feynman's thesis: a new approach to quantum theory}.
\newblock World Scientific Publishing Company, Incorporated, 1942.

\bibitem{feynman1965quantum}
R.P. Feynman and A.R. Hibbs.
\newblock {\em Quantum Mechanics and Path Integrals}.
\newblock International series in pure and applied physics. McGraw-Hill, 1965.

\bibitem{flanders1963differential}
H.~Flanders.
\newblock {\em Differential Forms With Applications to the Physical Sciences}.
\newblock Dover Books on Mathematics. Dover Publications, 1963.

\bibitem{QTD}
M.~Gemmer, J.~Michel and G.~Mahler.
\newblock {\em Quantum thermodynamics : emergence of thermodynamic behavior
  within composite quantum systems}.
\newblock Springer, 2nd ed edition, 2009.

\bibitem{kaku1999introduction}
M.~Kaku.
\newblock {\em Introduction to Superstrings and M-Theory}.
\newblock Graduate Texts in Contemporary Physics. Springer New York, 1999.

\bibitem{Kolmogorov}
A.~N. Kolmogorov.
\newblock {\em Fundation of probability theory}.
\newblock Moscow -Leningrad, January 1936.
\newblock (in Russian).

\bibitem{kurihara}
Y.~Kurihara.
\newblock Classical information theoretic view of physical measurements and
  generalized uncertainty relations.
\newblock {\em Journal of Theoretical and Applied Physics}, 7(1):28, 2013.

\bibitem{2013arXiv1304.5824K}
Y.~{Kurihara} and N.~M.~U. {Quach}.
\newblock {Advantages of the probability amplitude over the probability density
  in quantum mechanics}.
\newblock {\em ArXiv e-prints}, apr 2013.

\bibitem{landau1980statistical}
L.D. Landau, E.M. Lifshitz, and L.P. Pitaevskii.
\newblock {\em Statistical Physics: Theory of the Condensed State}.
\newblock Course of Theoretical Physics. Butterworth-Heinemann, 1980.

\bibitem{Lieb19991}
E.H. Lieb and J.~Yngvason.
\newblock The physics and mathematics of the second law of thermodynamics.
\newblock {\em Physics Reports}, 310(1):1 -- 96, 1999.

\bibitem{Shannon1948}
C.~E. Shannon.
\newblock A mathematical theory of communication.
\newblock {\em The Bell Systems Technical Journal}, 27:379--423, 623--656,
  1948.

\bibitem{story2002dynamics}
T.L. Story.
\newblock {\em Dynamics on Differential One-Forms}.
\newblock iUniverse, 2002.

\bibitem{1998kaisekirikigaku}
Y.~Yamamoto and K.~Nakamura.
\newblock {\em Analytical Mechanics 1}.
\newblock Asakura Publishing Co., Ltd., 1998.
\newblock (in Japanese).

\end{thebibliography}
\end{document}